\newtheorem{theorem}{Theorem}
\newtheorem{lemma}{Lemma}
\newtheorem{definition}{Definition}
\DeclareMathOperator*{\argmax}{argmax}
\title{A Note On Deterministic Submodular Maximization With Bounded Curvature}
\author{Wenxin Li\\
The Ohio State University\\
{\tt li.7328@osu.edu}\\
{\tt wenxinliwx.1@gmail.com}\\	
}
\begin{document}
\maketitle
\begin{abstract}
We show that the recent breakthrough result of Buchbinder and Feldman \cite{buchbinder2024deterministic} could further lead to a deterministic $(1-\kappa_{f}/e-\varepsilon)$-approximate algorithm for maximizing a submodular function with curvature $\kappa_{f}$ under matroid constraint.
\end{abstract}
A set function $f:2^{E}\rightarrow \mathbb{R}^{+}$ defined on ground $E$ of size $n$ is \emph{submodular}, if inequality $f(S)+f(T)\geq f(S\cup T)+f(S\cap T)$ holds for any two subsets $S,T\subseteq E$. One of the important concepts that influences the performance and analysis of algorithms for monotone submodular maximization is curvature \cite{conforti1984submodular}. The curvature of a submodular function measures how far the function deviates from being modular (i.e., linear), and it plays a crucial role in determining the approximation guarantees of various algorithms. Sviridenko et al. \cite{sviridenko2017optimal} presented two optimal algorithms achieving an approximation ratio of $(1-\kappa_{f}/e-\varepsilon)$, by optimizing the sum of a submodular and a linear function.

Several studies \cite{buchbinder2018deterministic, buchbinder2023deterministic, li2022submodular} have focused on the development and analysis of deterministic algorithms for submodular maximization, offering provable guarantees and practical effectiveness. Recently, Buchbinder and Feldman \cite{buchbinder2024deterministic} achieved a significant breakthrough by introducing the first deterministic algorithm with an optimal approximation ratio for submodular function maximization under matroid constraint. We demonstrate that if (approximate) local maxima are used as solutions in the non-oblivious local search, the algorithm also achieve an optimal approximation ratio for submodular functions with bounded curvature. 

The main result is stated in the following theorem. It is important to emphasize that this result is attributed to \cite{buchbinder2024deterministic}, and is also inspired by the work presented in \cite{sviridenko2017optimal}. Most of our notations are aligned with that of \cite{buchbinder2024deterministic}, our goal here is to point out this fact.

\begin{theorem}\label{maintheorem}
There exists a deterministic polynomial time algorithm that achieves an approximation ratio of $(1-\kappa_{f}/e-\varepsilon)$ for maximizing a submodular function $f(\cdot)$ with bounded curvature $\kappa_{f}$ under the matroid constraint.
\end{theorem}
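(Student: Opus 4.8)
The plan is to reduce the bounded-curvature problem to the unit-curvature problem solved by Buchbinder and Feldman \cite{buchbinder2024deterministic}, by decomposing $f$ into a monotone submodular part and a linear part in the spirit of \cite{sviridenko2017optimal}, and then to re-read their non-oblivious local search so that the linear part is recovered without the $(1-1/e)$ loss. Throughout, $f$ is monotone submodular, $\mathcal{M}$ is the matroid, and $O$ is a fixed optimal feasible set, which we may take (together with the solutions manipulated by the local search) to be a base of $\mathcal{M}$.

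\emph{The decomposition.} By definition of curvature, $f(e\mid E\setminus\{e\})\ge(1-\kappa_f)f(e)$ for every $e$. Set $\ell(e):=f(e\mid E\setminus\{e\})$, extend $\ell$ to the modular function $\ell(S):=\sum_{e\in S}\ell(e)\ge 0$, and put $g(S):=f(S)-\ell(S)$. Then $g$ is monotone, since $g(e\mid S)=f(e\mid S)-f(e\mid E\setminus\{e\})\ge 0$ by submodularity; $g$ is submodular, since subtracting a modular function preserves submodularity; and $g\ge 0$. Moreover $\ell(S)\le f(S)-f(\emptyset)\le f(S)$, hence also $g(S)\le f(S)$. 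Using subadditivity, $\sum_{e\in O}f(e)\ge f(O)$, so
\[
\ell(O)=\sum_{e\in O}f(e\mid E\setminus\{e\})\ \ge\ (1-\kappa_f)\sum_{e\in O}f(e)\ \ge\ (1-\kappa_f)\,f(O),\qquad g(O)=f(O)-\ell(O).
\]
Hence it suffices to output a feasible $S$ with $g(S)+\ell(S)\ge(1-\tfrac1e)g(O)+\ell(O)-\varepsilon f(O)$, since then $f(S)=g(S)+\ell(S)\ge(1-\tfrac1e)f(O)+\tfrac1e\ell(O)-\varepsilon f(O)\ge(1-\tfrac{\kappa_f}{e}-\varepsilon)f(O)$.

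\emph{The algorithm and its analysis.} I would run the deterministic non-oblivious local search of \cite{buchbinder2024deterministic} with auxiliary objective equal to \emph{their potential for the submodular function $g$, plus the modular function $\ell$}; computing the $n$ values $\ell(e)$ costs $n$ oracle calls, so this keeps the algorithm deterministic and polynomial. (If, as is typical, their potential is a nonnegative average $\sum_{T\subseteq S}\alpha_{|S|,|T|}f(T)$ with $\sum_{T\ni e,\,T\subseteq S}\alpha_{|S|,|T|}=1$ for all $e\in S$, then the $\ell$-part of the potential collapses to $\ell(S)$ itself and the algorithm is literally unchanged; only the analysis below is new.) Let $S$ be the returned approximate local maximum and $\pi$ a matroid exchange bijection between $S$ and $O$, so $S-\pi(o)+o\in\mathcal{M}$ for every $o\in O\setminus S$. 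Summing approximate local optimality of the combined potential over $o\in O\setminus S$ yields
\[
-\delta\ \le\ \sum_{o\in O\setminus S}\bigl[\mathrm{aux}_g(S)-\mathrm{aux}_g(S-\pi(o)+o)\bigr]\ +\ \sum_{o\in O\setminus S}\bigl[\ell(\pi(o))-\ell(o)\bigr],
\]
where $\delta$ absorbs the total slack from the approximate local maximum (and any internal discretization of \cite{buchbinder2024deterministic}), chosen polynomially small so that $\delta\le\tfrac{\varepsilon}{2}f(O)$. The second sum equals $\ell(S\setminus O)-\ell(O\setminus S)=\ell(S)-\ell(O)$, picking up no $1/e$ factor because $\ell$ is modular. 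The analysis of \cite{buchbinder2024deterministic} uses only submodularity and monotonicity of its input, so it bounds the first sum by $g(S)-(1-\tfrac1e)g(O)+\tfrac{\varepsilon}{2}f(O)$ (their guarantee with $f$ replaced by $g$). Substituting and rearranging gives $g(S)+\ell(S)\ge(1-\tfrac1e)g(O)+\ell(O)-\varepsilon f(O)$, the inequality required above; the theorem follows.

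\emph{The main obstacle.} The delicate part is matching this sketch to the precise machinery of \cite{buchbinder2024deterministic}: confirming that their auxiliary potential is, or can be taken to be, additive across the summands $g$ and $\ell$, with a linear summand contributing exactly $\ell(S)$ to it; checking that whatever exchange or stagewise structure drives their analysis — not necessarily the single bijection $\pi$ used above — still makes the $\ell$-terms telescope to $\ell(S)-\ell(O)$ without a $1/e$ factor; and verifying that the error from using \emph{approximate} rather than exact local maxima stays $O(\varepsilon)f(O)$, which holds because $g(O),\ell(O)\le f(O)$. Once these (essentially routine) checks are in place, the bounded-curvature guarantee drops out of the unit-curvature analysis of \cite{buchbinder2024deterministic} with no new ideas, which is precisely the observation this note makes.
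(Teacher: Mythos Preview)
Your proposal is correct and follows essentially the same route as the paper: decompose $f=g+\ell$ into a monotone submodular part plus a nonnegative linear part, run the Buchbinder--Feldman non-oblivious local search with the combined potential $\Phi_g+\ell$, and observe that the linear summand telescopes exactly through the exchange bijections so that only $g$ suffers the $1-1/e$ loss. The only cosmetic difference is the choice of linear part---the paper takes $l(e)=(1-\kappa_f)f(\{e\})$ whereas you take $\ell(e)=f(e\mid E\setminus\{e\})$---and the ``main obstacle'' you flag (checking that the linear term really telescopes inside the $\ell$-fold copy structure of \cite{buchbinder2024deterministic}) is exactly what the paper's Lemma~\ref{ratiolemma} verifies via the identity~\eqref{linearpart}.
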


Firstly, we introduce some basic definitions that will be used throughout the paper.
\begin{definition}[Curvature \cite{sviridenko2017optimal}] The curvature of a monotone submodular $f$ is defined as,
\begin{align*}
\kappa_{f} = 1- \min_{S, e\notin S}\frac{f(S+e)-f(e)}{f(e)}
\end{align*}
\end{definition}

\begin{definition}[Submodular Ratio] For a set function $f(\cdot)$, its submodularity ratio is defined as 
\begin{align*}
\gamma_{f} = \min_{T\subseteq S, e\notin S} \frac{f(T+e)-f(T)}{f(S+e)-f(S)}
\end{align*}
$f(\cdot)$ is submodular iff $\gamma_{f}=1$.
\end{definition}

The following lemma is implied by \cite{buchbinder2024deterministic}, we include it here for completeness.
\begin{lemma}[\cite{buchbinder2024deterministic}]
Following the notations in \cite{buchbinder2024deterministic}, and let $\mathrm{OPT}=\argmax\{f(T)|T\in \mathcal{I}\}$, $\alpha_{i} = \frac{(1+ \frac{\gamma^{3}_{g}}{\ell})^{i-1}}{\binom{\ell-1}{i-1}}$. Consider submodular function $\Phi_{g}(\cdot): 2^{E}\times [\ell] \rightarrow \mathbb{R}^{+}$ defined as
\begin{align*}
\Phi_{g}(S)=\frac{\gamma_{g}}{\ell}(1 +  \frac{\gamma^{3}_{g}}{\ell})^{-\ell}\cdot\sum_{J\subseteq [\ell]}\alpha_{|J|} \cdot g(\pi_{J}(S)),
\end{align*}
for submodular function $g(\cdot): 2^{E}\rightarrow \mathbb{R}^{+}$, we have 
\begin{align}\label{submodularpart}
g(\pi_{[\ell]}(S))  \geq & (1 - (1 + \frac{\gamma^{3}_{g}}{\ell})^{-\ell} )\cdot g(\mathrm{OPT})+  [\Phi_{g}(S)-\frac{1}{\ell}\sum_{j\in[\ell]}\sum_{(u, k) \in S} \Phi_{g}(S-(u, k)+h_{j}((u,k))) ]
\end{align}
where $h_{j}(\cdot):S\rightarrow \mathrm{OPT}\times\{j\}$ is the bijection ensured by basis exchange property of $\mathcal{I}^{\prime}$. 
\end{lemma}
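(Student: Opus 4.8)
The plan is to prove \eqref{submodularpart} directly from the definition of $\Phi_g$, using submodularity of $g$ to reorganize the averaged local-search difference into a telescoping sum. Throughout, regard $S$ as a subset of the layered ground set $E \times [\ell]$, so that its elements are pairs $(u,k)$; write $t = 1 + \gamma_g^3/\ell$, so that $\alpha_i = t^{i-1}/\binom{\ell-1}{i-1}$; and recall that $\pi_J(S) = \{u : (u,k) \in S \text{ for some } k \in J\}$, so that $\pi_{[\ell]}(S)$ is the full projection onto $E$. The first step is to expand both $\Phi_g(S)$ and each $\Phi_g(S - (u,k) + h_j((u,k)))$ through the definition as weighted sums of the terms $g(\pi_J(\cdot))$, and then, for each fixed $J$, to determine how $\pi_J(\cdot)$ changes under the swap $(u,k) \mapsto h_j((u,k))$. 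Since $h_j$ is a bijection from $S$ onto $\mathrm{OPT} \times \{j\}$, the inserted element affects $\pi_J$ exactly when $j \in J$, while the deleted pair $(u,k)$ affects $\pi_J$ only when $k \in J$; isolating these two effects is what lets me treat the gain contributed by $\mathrm{OPT}$ separately from the loss incurred by the current solution.

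Second, I would establish the source of the $g(\mathrm{OPT})$ term. Fixing $J$ with $j \in J$, as $(u,k)$ ranges over $S$ the inserted elements range bijectively over $\mathrm{OPT}$, so summing the insertion marginals and invoking the submodularity ratio gives $\sum_{o \in \mathrm{OPT}} [g(\pi_J(S)+o) - g(\pi_J(S))] \geq \gamma_g (g(\pi_J(S) \cup \mathrm{OPT}) - g(\pi_J(S))) \geq \gamma_g (g(\mathrm{OPT}) - g(\pi_J(S)))$, the last inequality by monotonicity. This is the only place $g(\mathrm{OPT})$ enters, and it always appears paired with the negative term $-\gamma_g\, g(\pi_J(S))$, which will feed into the telescoping on the solution side. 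The deletion marginals produced by removing each $(u,k)$ are the remaining quantities, and they are precisely what must cancel across layers.

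Third, and this is the heart of the argument, I would multiply each per-$(J,j)$ estimate by $\alpha_{|J|}$, sum over all $J \subseteq [\ell]$ and $j \in [\ell]$, normalize by $\frac{\gamma_g}{\ell} t^{-\ell}$, and then collect the coefficient attached to each $g(\pi_J(S))$ and to $g(\mathrm{OPT})$. The defining identity $\alpha_i \binom{\ell-1}{i-1} = t^{i-1}$ is engineered exactly for this purpose: the number of sets $J$ of size $i$ containing a prescribed layer is $\binom{\ell-1}{i-1}$, and combining this count with the geometric factor $t^{i-1}$ makes the coefficient of every intermediate projection $g(\pi_J(S))$ cancel between the insertion and deletion accountings. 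Because $\pi_{[\ell]}$ is the maximal projection, the telescope has no term above it to cancel against, so the top level survives as the boundary term $g(\pi_{[\ell]}(S))$ appearing on the left-hand side of \eqref{submodularpart}. Summing the surviving geometric weights $\sum_{i=1}^{\ell} t^{i-1} = (t^\ell - 1)/(t-1)$ and using $t - 1 = \gamma_g^3/\ell$ converts the normalization $\frac{\gamma_g}{\ell} t^{-\ell}$ into the clean factor $1 - t^{-\ell} = 1 - (1 + \gamma_g^3/\ell)^{-\ell}$ multiplying $g(\mathrm{OPT})$.

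Finally, rearranging the assembled inequality isolates $g(\pi_{[\ell]}(S))$ on the left and reconstitutes the bracketed difference $\Phi_g(S) - \frac{1}{\ell}\sum_{j}\sum_{(u,k)\in S}\Phi_g(S - (u,k) + h_j((u,k)))$ on the right, which is exactly \eqref{submodularpart}. I expect the third step to be the main obstacle: verifying that the weights $\alpha_i = t^{i-1}/\binom{\ell-1}{i-1}$ force every intermediate $g(\pi_J(S))$ to cancel, leaving only the boundary term $g(\pi_{[\ell]}(S))$ and the $(1 - t^{-\ell})$-multiple of $g(\mathrm{OPT})$. The cube in $\gamma_g^3$ arises because the chain of estimates applies the submodularity-ratio inequality a bounded (at most threefold) number of times along the way, namely in the aggregate OPT-gain bound, in comparing marginals across adjacent layers, and in the monotonicity step; for genuinely submodular $g$, where $\gamma_g = 1$, the guarantee reduces to the familiar $1 - (1 + 1/\ell)^{-\ell} \to 1 - 1/e$ factor.
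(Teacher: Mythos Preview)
Your plan is correct and follows essentially the same route as the paper's proof: expand the swap differences $\Phi_g(S-(u,k)+h_j((u,k)))-\Phi_g(S)$ layer by layer, separate the insertion contribution from $\mathrm{OPT}$ and the deletion contribution from $S$, apply the submodularity-ratio bound three times, and then use the recursion $\alpha_{i+1}(\ell-i)=(1+\gamma_g^3/\ell)\,i\,\alpha_i$ so that the intermediate $g(\pi_J(S))$ coefficients cancel, leaving only the boundary term $g(\pi_{[\ell]}(S))$ and the $(1-t^{-\ell})\,g(\mathrm{OPT})$ term. The paper additionally handles the cases $u\in\mathrm{OPT}$ versus $u\notin\mathrm{OPT}$ separately when bounding each swap (since in the former case the inserted element already lies in $\pi_J(S)$), a small bookkeeping detail you would need to supply when carrying out the computation.
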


\begin{proof}

For every $j \in [\ell]$, we have $\mathrm{OPT} \times \{j\} \in \mathcal{I}'$. Consider element $u\in \pi_{J}(S)\cap \mathrm{OPT}$ and $(u, k)\in S$, we have
\begin{align}\label{ineq1}
&\frac{\ell}{\gamma_{g}}(1 +  \frac{\gamma^{3}_{g}}{\ell})^{\ell}[\Phi_{g}(S+(h_{j}(u),j)-(u,k))-\Phi_{g}(S)] \notag\\
= &\frac{\ell}{\gamma_{g}}(1 +  \frac{\gamma^{3}_{g}}{\ell})^{\ell}[\Phi_{g}(S+(u,j)-(u,k))-\Phi_{g}(S)]\notag\\
=&\sum_{J\subseteq [\ell], j\in J, k\notin J}\alpha_{|J|} \cdot g(u|\pi_{J}(S))-\sum_{J\subseteq [\ell], j\notin J, k\in J}\alpha_{|J|} \cdot g(u|\pi_{J}(S)-u)\notag\\
\geq & \sum_{J\subseteq [\ell], j\in J, k\notin J}\alpha_{|J|} \cdot g(u|\pi_{J}(S))-\sum_{J\subseteq [\ell], k\in J}\alpha_{|J|} \cdot g(u|\pi_{J}(S)-u)\notag\\
= & \sum_{J\subseteq [\ell], j\in J}\alpha_{|J|} \cdot g(u|\pi_{J}(S))-\sum_{J\subseteq [\ell], k\in J}\alpha_{|J|} \cdot g(u|\pi_{J}(S)-u)\notag\\
\geq & \gamma_{g}\sum_{J\subseteq [\ell], j\in J}\alpha_{|J|} \cdot g(u|\pi_{J}(S))-\sum_{J\subseteq [\ell], k\in J}\alpha_{|J|} \cdot g(u|\pi_{J}(S)-u)
\end{align}
The last inequality holds because $\gamma_{g}\leq 1$ and $g(\cdot)$ is monotone non-decreasing. For $u\in \pi_{J}(S)\setminus \mathrm{OPT}$ and $(u,k)\in S$,
\begin{align}\label{ineq2}
&\frac{\ell}{\gamma_{g}}(1 +  \frac{\gamma^{3}_{g}}{\ell})^{\ell}[\Phi_{g}(S+(h_{j}(u),j)-(u,k))-\Phi_{g}(S)]\notag\\
=& \sum_{J\subseteq [\ell], k \not \in J, j \in J}  \alpha_{|J|} \cdot g(h_{j}(u) | \pi_{J}(S))+ \sum_{J\subseteq [\ell], k \in J, j \in J} \alpha_{|J|} \cdot [g(\pi_{J}(S)-u+h_{j}(u))-g(\pi_{J}(S))]\notag\\
&- \sum_{J\subseteq [\ell], k \in J, j \not \in J} \alpha_{|J|} \cdot g(u | \pi_{J}(S)-u)\notag\\
\geq & \gamma_{g} \sum_{J\subseteq [\ell],  j \in J}  \alpha_{|J|} \cdot g(h_{j}(u) | \pi_{J}(S)) - \sum_{J\subseteq [\ell], k \in J} \alpha_{|J|} \cdot g(u | \pi_{J}(S)-u),
\end{align}
where the inequality is due to the fact that
\begin{align*}
&g(\pi_{J}(S)-u+h_{j}(u))-g(\pi_{J}(S))\\
=& g(h_{j}(u)|\pi_{J}(S)-u)-g(u|\pi_{J}(S)-u)\\
\geq &\gamma_{g}\cdot g(h_{j}(u)|\pi_{J}(S))-g(u|\pi_{J}(S)-u).
\end{align*}
Combining (\ref{ineq1}) and (\ref{ineq2}), 
\begin{align*}
&\sum_{v \in \mathrm{OPT}} \sum_{J \subseteq [\ell], j \in J} \alpha_{|J|} \cdot g(v | \pi_J(S))\\
=&\sum_{u \in \pi_J(S)\cap \mathrm{OPT}}\sum_{J \subseteq [\ell], j \in J} \alpha_{|J|} \cdot g(u | \pi_J(S))+\sum_{u \in \pi_J(S)\setminus \mathrm{OPT}}\sum_{J \subseteq [\ell], j \in J} \alpha_{|J|} \cdot g(h_{j}(u) | \pi_J(S))\\
\leq & \frac{1}{\gamma_{g}} \sum_{u \in \pi_J(S)}\sum_{J\subseteq [\ell], k \in J} \alpha_{|J|} \cdot g(u | \pi_{J}(S)-u) +\frac{\ell(1 +  \frac{\gamma^{3}_{g}}{\ell})^{\ell}}{\gamma^{2}_{g}} \sum_{\substack{ (u,k)\in S}}[\Phi_{g}(S+(h_{j}(u),j)-(u,k))-\Phi_{g}(S)]
\end{align*}
Taking the average of this inequality across all $j \in [\ell]$ results in
\begin{align*}
&\sum_{(u, k) \in S} \sum_{J \subseteq [\ell], k \in J}   \alpha_{|J|} \cdot g(u | \pi_J(S)-u)\\
&\geq \gamma_{g}
\sum_{u \in \mathrm{OPT}} \sum_{J \subseteq [\ell]} \frac{|J|}{\ell} \cdot \alpha_{|J|} \cdot g(u | \pi_J(S))+ \frac{\ell}{\gamma_{g}} (1 +  \frac{\gamma^{3}_{g}}{\ell})^{\ell} [\Phi_{g}(S) - \frac{1}{\ell}\sum_{j\in[\ell]}\sum_{(u, k) \in S} \Phi_{g}(S-(u, k)+h_{j}((u,k))) ].
\end{align*}
Furthermore, 
\begin{align*}
&\sum_{J \subseteq [\ell]} 
\left[\alpha_{|J|} \cdot |J| -\alpha_{|J|+1} \cdot (\ell-|J|)\right] \cdot g(\pi_J(S))\\
=& 
\sum_{k \in [\ell]} \sum_{J \subseteq [\ell], k \in J}  
\alpha_{|J|} \cdot [g(\pi_J(S)) -g(\pi_{J\setminus \{k\}}(S))]\\
\geq & \gamma_{g}\cdot \sum_{k \in [\ell]} \sum_{J \subseteq [\ell], k \in J}    \alpha_{|J|} \cdot  (\sum_{(u, k)\in S} g(u | \pi_J(S)-u) ) \\
= & \gamma_{g}\cdot\sum_{(u, k)\in S} \sum_{J \subseteq [\ell], k \in J}    \alpha_{|J|} \cdot g(u | \pi_J(S)-u)\\
\geq & \gamma^{2}_{g}\cdot \sum_{u \in \mathrm{OPT}} \sum_{J \subseteq [\ell]} \frac{|J|}{\ell} \cdot \alpha_{|J|} \cdot g(u | \pi_J(S)) + \ell(1 +  \frac{\gamma^{3}_{g}}{\ell})^{\ell}\cdot  [\Phi_{g}(S)-\frac{1}{\ell}\sum_{j\in[\ell]}\sum_{(u, k) \in S} \Phi_{g}(S-(u, k)+h_{j}((u,k))) ] \\
\geq & \gamma^{3}_{g}\cdot\sum_{J \subseteq [\ell]}\frac{|J|}{\ell} \cdot \alpha_{|J|} \cdot g(\mathrm{OPT} | \pi_J(S)) +\ell(1 +  \frac{\gamma^{3}_{g}}{\ell})^{\ell}\cdot  [\Phi_{g}(S)-\frac{1}{\ell}\sum_{j\in[\ell]}\sum_{(u, k) \in S} \Phi_{g}(S-(u, k)+h_{j}((u,k))) ] \\ 
\geq & \gamma^{3}_{g}\cdot \sum_{J \subseteq [\ell]}\frac{|J|}{\ell} \cdot \alpha_{|J|} \cdot \left[g(\mathrm{OPT}) -g(\pi_J(S))\right] +\ell(1 +  \frac{\gamma^{3}_{g}}{\ell})^{\ell}\cdot  [\Phi_{g}(S)-\frac{1}{\ell}\sum_{j\in[\ell]}\sum_{(u, k) \in S} \Phi_{g}(S-(u, k)+h_{j}((u,k))) ].
\end{align*}
Since $\alpha_{i+1} (\ell - i) =  (1+ \frac{\gamma^{3}_{g}}{\ell} ) \cdot i\cdot \alpha_i$, the above inequality can be reduced to
\begin{align*} 
	&\alpha_\ell \cdot (\ell + \gamma^{3}_{g}) \cdot g(\pi_{[\ell]}(S)) 
  \\ \geq &
  \ell \cdot [(1 + \gamma^{3}_{g}/\ell)^\ell - 1] \cdot g(\mathrm{OPT})
  +\ell(1 +  \frac{\gamma^{3}_{g}}{\ell})^{\ell}\cdot  [\Phi_{g}(S)-\frac{1}{\ell}\sum_{j\in[\ell]}\sum_{(u, k) \in S} \Phi_{g}(S-(u, k)+h_{j}((u,k))) ],
\end{align*}
which is equivalent to 
\begin{align*}
g(\pi_{[\ell]}(S))  \geq  (1 - (1 +  \frac{\gamma^{3}_{g}}{\ell})^{-\ell} )\cdot g(\mathrm{OPT})+ [\Phi_{g}(S)-\frac{1}{\ell}\sum_{j\in[\ell]}\sum_{(u, k) \in S} \Phi_{g}(S-(u, k)+h_{j}((u,k))) ].
\end{align*}
The proof is complete.
\end{proof}

\begin{lemma}\label{ratiolemma}
Given an additive function $l(\cdot): E\rightarrow \mathbb{R}^{+}$, the local maximum $S$ found by non-oblivious local search algorithm guided by 
\begin{align*}
\Phi_{f}(S) = \Phi_{g}(S) + l(\pi_{[\ell]}(S)),
\end{align*}
satisfies that
\begin{align*}
g(\pi_{[\ell]}(S))+l(\pi_{[\ell]}(S))\geq (1 - (1 +  \frac{\gamma^{3}_{g}}{\ell})^{-\ell} ) \cdot g(\mathrm{OPT}) + l(\mathrm{OPT}).
\end{align*}
\end{lemma}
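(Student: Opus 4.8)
\emph{Proof plan.}\quad The idea is to apply the preceding lemma to the submodular part $g$, and then to exploit the fact that $S$ is a local maximum of the \emph{full} potential $\Phi_{f}=\Phi_{g}+l\circ\pi_{[\ell]}$ (not merely of $\Phi_{g}$); because $l$ is additive, the correction this adds will be worth $l(\mathrm{OPT})$ exactly, rather than only a $(1-e^{-1})$-fraction of it. Write $C_{g}:=\tfrac1\ell\sum_{j\in[\ell]}\sum_{(u,k)\in S}[\Phi_{g}(S)-\Phi_{g}(S-(u,k)+h_{j}((u,k)))]$ for the bracketed correction term in~(\ref{submodularpart}), so that the preceding lemma reads $g(\pi_{[\ell]}(S))\ge(1-(1+\gamma_{g}^{3}/\ell)^{-\ell})\,g(\mathrm{OPT})+C_{g}$. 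It therefore suffices to prove $C_{g}\ge l(\mathrm{OPT})-l(\pi_{[\ell]}(S))$, since substituting this into the displayed inequality and moving $l(\pi_{[\ell]}(S))$ to the left gives exactly the claim.

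I would obtain the bound on $C_{g}$ in two steps. \emph{Local optimality.} Since $\mathrm{OPT}\times\{j\}\in\mathcal{I}'$, each $S-(u,k)+h_{j}((u,k))$ is a single-element exchange neighbour of $S$ in $\mathcal{I}'$, so $\Phi_{f}(S-(u,k)+h_{j}((u,k)))\le\Phi_{f}(S)$ for all $j\in[\ell]$ and $(u,k)\in S$; expanding $\Phi_{f}=\Phi_{g}+l\circ\pi_{[\ell]}$ and averaging over $j$ and $(u,k)$ gives $C_{g}\ge-C_{l}$, where $C_{l}:=\tfrac1\ell\sum_{j}\sum_{(u,k)\in S}[l(\pi_{[\ell]}(S))-l(\pi_{[\ell]}(S-(u,k)+h_{j}((u,k))))]$. \emph{Evaluating $C_{l}$.} I would use the same bijections $h_{j}$ as in the preceding lemma, for which $h_{j}((u,k))=(u,j)$ whenever $u\in\mathrm{OPT}$ (such an element is only recoloured, so $\pi_{[\ell]}$ is unaffected and the swap contributes $0$), whereas for $u\notin\mathrm{OPT}$ one has $h_{j}((u,k))=(v,j)$ with $v\in\mathrm{OPT}\setminus\pi_{[\ell]}(S)$, these $v$'s sweeping out $\mathrm{OPT}\setminus\pi_{[\ell]}(S)$ bijectively as $(u,k)$ ranges over the pairs in $S$ with $u\notin\mathrm{OPT}$, and $\pi_{[\ell]}(S-(u,k)+h_{j}((u,k)))=\pi_{[\ell]}(S)-u+v$. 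Additivity of $l$ then collapses each inner sum to $l(\pi_{[\ell]}(S)\setminus\mathrm{OPT})-l(\mathrm{OPT}\setminus\pi_{[\ell]}(S))=l(\pi_{[\ell]}(S))-l(\mathrm{OPT})$ (the common term $l(\pi_{[\ell]}(S)\cap\mathrm{OPT})$ cancels), so $C_{l}=l(\pi_{[\ell]}(S))-l(\mathrm{OPT})$ and $C_{g}\ge-C_{l}=l(\mathrm{OPT})-l(\pi_{[\ell]}(S))$, as required.

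The step I expect to be the most delicate is the evaluation of $C_{l}$: it requires tracking exactly how the projection $\pi_{[\ell]}$ changes under each basis-exchange swap — this is where the parallel-extension structure of $\mathcal{I}'$, the special form of the $h_{j}$, and the additivity of $l$ all must be used together — and one must check that this linear analogue is normalised to match the correction term in~(\ref{submodularpart}) termwise, so that the two combine cleanly. (A routine secondary point: the local search returns only an \emph{approximate} local maximum, so the swap inequalities hold up to a tiny additive slack, which accumulates to an $O(\varepsilon)$ loss absorbed into the statement.) Finally, note that this lemma is exactly what drives Theorem~\ref{maintheorem}: with the decomposition $f=g+l$, $l(e)=f(E)-f(E-e)$, the function $g$ is monotone submodular, $l(\mathrm{OPT})\ge(1-\kappa_{f})\,f(\mathrm{OPT})$ and $g(\mathrm{OPT})=f(\mathrm{OPT})-l(\mathrm{OPT})$, so taking $\ell$ large enough that $(1+1/\ell)^{-\ell}\le e^{-1}+\varepsilon$ (and using $\gamma_{g}=1$) turns the conclusion into $f(\pi_{[\ell]}(S))\ge(1-\kappa_{f}/e-\varepsilon)\,f(\mathrm{OPT})$.
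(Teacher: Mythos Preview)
Your argument is correct and follows the same route as the paper's proof: establish the linear identity $l(\pi_{[\ell]}(S))=l(\mathrm{OPT})+C_{l}$ via the bijections $h_{j}$ and additivity of $l$, add it to inequality~(\ref{submodularpart}), and use that $S$ is a local maximum of $\Phi_{f}=\Phi_{g}+l\circ\pi_{[\ell]}$ to drop the combined correction term $C_{g}+C_{l}\ge 0$. The paper compresses all of this into ``summing (\ref{submodularpart}) and (\ref{linearpart})'' and leaves the local-optimality step implicit; you have simply unpacked it, and your careful case split (recolouring when $u\in\mathrm{OPT}$, genuine swap when $u\notin\mathrm{OPT}$) in evaluating $C_{l}$ matches exactly how the same bijections are used in the proof of the preceding lemma.
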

\begin{proof}
Note that
\begin{align*}
l(\pi_{[\ell]}(S)) = l(\mathrm{OPT}) + l(\pi_{[\ell]}(S)) -\sum_{(u, k) \in S} l(\pi_{[\ell]}(S-(u, k)+h_{j}((u,k))), \forall j\in [\ell],
\end{align*}
hence
\begin{align}\label{linearpart}
l(\pi_{[\ell]}(S)) = l(\mathrm{OPT}) + l(\pi_{[\ell]}(S)) -\frac{1}{\ell}\sum_{j\in[\ell]}\sum_{(u, k) \in S} l(S-(u, k)+h_{j}((u,k))), \forall j\in [\ell].
\end{align}
By summing inequalities (\ref{submodularpart}) and (\ref{linearpart}), we can finish the proof. 
\end{proof}

Let $l(\cdot) = (1-\kappa_{f})f(\cdot)$, $g(\cdot)=f(\cdot)-l(\cdot)$, we can obtain Theorem \ref{maintheorem} from Lemma \ref{ratiolemma}. 
\paragraph{Guessing the submodular ratio.} Note that for non-submodular $g(\cdot)$, we can geometrically guess the value of submodular ratio $\gamma_{g}$ when using $\Phi_{g}(\cdot)$, which is similar as \cite{harshaw2019submodular}.

\bibliographystyle{plain}	
\bibliography{paper}
\end{document}